\newcommand{\keywords}[1]{%
  \vspace{0.75em}\noindent\textbf{Keywords: }#1\par
}
\theoremstyle{definition}
\newtheorem{definition}{Definition}
\newtheorem{postulate}{Postulate}
\theoremstyle{plain}
\newtheorem{lemma}{Lemma}
\newtheorem{proposition}{Proposition}
\theoremstyle{remark}
\newtheorem{remark}{Remark}
\title{From Kinematics to Interference: Operational Requirements for the Quantum Principle of Relativity}
\author{%
  Mikołaj Sienicki\thanks{Polish–Japanese Academy of Information Technology, ul.~Koszykowa~86, 02–008 Warsaw, Poland, European Union}%
  \quad and \quad
  Krzysztof Sienicki\thanks{Chair of Theoretical Physics of Naturally Intelligent Systems (\(\mathbb{NIS}\)\textsuperscript{\copyright}), Lipowa~2/Topolowa~19, 05–807 Podkowa Leśna, Poland, European Union; E-mail: \texttt{niskrissienicki@gmail.com}}%
}
\date{\today}
\begin{document}
\maketitle

\begin{abstract}
The ``quantum principle of relativity'' (QPR) puts forward an ambitious idea: extend special relativity with a formally
superluminal branch of Lorentz-type maps, and treat the resulting consistency constraints as hints about why quantum
theory has the structure it does \citep{DraganEkert2020}. The discussion that followed has emphasized a basic point:
writing down coordinate maps is not the same thing as providing a physical theory. In particular, quantum superposition
is not operationally defined by drawing multiple ``paths on paper'': it is defined by what happens when alternatives
\emph{recombine} in an interference loop \citep{DelSantoHorvat2022,Horodecki2024}. In parallel, careful $1{+}1$ analyses
have clarified how sign conventions and time-orientation choices enter the superluminal formulas \citep{Damski2025}.
Finally, tachyonic QFT proposals suggest a possible mathematical ``bridge'' via an enlarged (twin) Hilbert space
\citep{Paczos2024}, although this proposal remains contested (e.g.\ on commutator covariance and microcausality grounds)
\citep{Jodlowski2024Unphysical}.

The aim of this short note is organizational. We keep three layers separate: (K) kinematics (which maps exist and what
they preserve), (O) operational content (what an experiment must actually reproduce, especially closed-loop
interference), and (D/B) dynamics and bridges (how amplitudes and probabilities are generated, and how subluminal and
superluminal sectors might be linked). The goal is not ``relativity derives quantum theory,'' but a clear checklist of
what must be added for that ambition to become a well-posed programme.
\end{abstract}

\keywords{Quantum principle of relativity; superluminal Lorentz maps; operational completeness; Mach--Zehnder interference; complex amplitudes}

% ============================================================
\section{Introduction}
% ============================================================

It is easy to see why QPR is appealing. Special relativity is driven by a crisp invariance idea: the form of the laws
should not depend on which inertial frame you use. QPR asks whether one can push that mindset further by allowing
``superluminal'' descriptions, and then reading any resulting tensions as clues about quantum structure
\citep{DraganEkert2020}.

The main criticism is just as straightforward. A new class of coordinate maps, by itself, does not tell you what a
detector records, what counts as an outcome, or why an interferometer produces fringes. Those are operational questions,
and they are exactly where quantum theory earns its keep. In particular, ``superposition'' is not a slogan about having
several pictures of one trajectory; operationally it is about \emph{recombination} and \emph{phase dependence} in closed
loops \citep{DelSantoHorvat2022}. This is why several comments insist that any claimed ``derivation'' must be explicit
about which extra assumptions are being introduced \citep{Horodecki2024}.

This paper does not try to settle the debate, and it does not defend the strong claim that relativity alone implies
quantum theory. Instead, it tries to do three practical things:
\begin{itemize}
\item keep the $1{+}1$ kinematics tidy (including sign and orientation choices that are easy to gloss over);
\item state operational requirements in a way that cannot be evaded by coordinate relabeling;
\item spell out what a genuine ``bridge'' would have to provide if one wants more than a suggestive narrative.
\end{itemize}

The overall message is constructive. One can write down a coherent programme (call it ``QPR+'')---but it needs deliberate
additions. Kinematics alone will not produce loop interference; and a serious superluminal sector requires an explicit
dynamical framework that outputs observable statistics.

% ============================================================
\section{Programme statement: kinematics, operations, dynamics}
% ============================================================

Writing Lorentz-like formulas for $|V|>c$ is the easy part. The hard questions are different:
\begin{itemize}
\item Do such maps correspond to anything like physical ``frame changes'' in $1{+}3$?
\item Even if we accept them as auxiliary redescriptions, do they force genuinely quantum phenomena, such as loop
interference, rather than merely suggestive stories?
\end{itemize}
To keep these issues from being mixed together, we separate three layers.

\begin{definition}[Kinematic layer]
A kinematic layer specifies a class of affine linear maps between coordinate systems (including translations) used to
relate descriptions of events and worldlines, together with their algebraic properties and stated domain of use. A
kinematic map need not be a physical symmetry of Minkowski spacetime.
\end{definition}

\begin{definition}[Operational layer]
An operational layer specifies preparations, measurements, and outcome statistics, and states what is required to
count as a successful reproduction of an effect (for example, closed-loop interference).
\end{definition}

\begin{definition}[Admissible redescriptions (for QPR+; null-structure level)]
A redescription is called admissible if it is an affine map between coordinate systems on Minkowski space whose linear
part is invertible, preserves inertial straight lines, and maps null lines to null lines (equivalently: it preserves
the null cone as a set). In spacetime dimension $N>2$ (in particular $1{+}3$), standard results imply that such linear
null-cone-preserving maps are conformal Lorentz (proportional to Lorentz transformations) and therefore do not
interchange timelike and spacelike directions; see, e.g., Lemma~4.3 of \citet{BergqvistSenovilla2001}. By contrast, in
$1{+}1$ there exist additional null-cone-preserving maps with ``interval sign-flip'' behaviour. In this paper we treat
those $1{+}1$ maps as auxiliary redescriptions for kinematic analysis, not as physical inertial-frame symmetries.
\end{definition}

\begin{definition}[Physical causal past]
$J^{-}(e)$ denotes the physical causal past defined by the standard Minkowski lightcone (the one operationally fixed by
light signals and preserved by Lorentz isometries). When we use non-isometric auxiliary redescriptions, we do \emph{not}
require them to preserve $J^\pm$; they are introduced only to compare explanatory forms.
\end{definition}

\begin{definition}[Dynamical/bridge layer]
A dynamical/bridge layer provides an evolution rule or QFT structure that generates measurable statistics, and, when
needed, a precise mechanism connecting subluminal and superluminal sectors in one coherent mathematical representation.
\end{definition}

\begin{remark}
A concise way to summarize the operational critique of \citet{DelSantoHorvat2022} is this: QPR moves from (K) to (O)/(D)
too quickly. A kinematic extension does not automatically deliver recombination interference.
\end{remark}

% ============================================================
\section{Superluminal Lorentz extensions: what kinematics gives (and what it does not)}
% ============================================================

\paragraph{$1{+}1$ formulas and the extra sign choice.}
In $1{+}1$ spacetime, standard Lorentz transformations between inertial coordinates $(t,x)$ and $(t',x')$ are
\begin{equation}
x'=\gamma(x-Vt),\qquad t'=\gamma\left(t-\frac{V}{c^{2}}x\right),\qquad
\gamma=\frac{1}{\sqrt{1-V^{2}/c^{2}}},\quad |V|<c.
\label{eq:lorentz}
\end{equation}
Under certain assumption sets (and in some ``algebraic extension'' discussions), one also writes a second branch,
formally defined for $|V|>c$ (see also \citealt{Damski2025}):
\begin{equation}
x'=\eta(V)\,\widetilde{\gamma}(x-Vt),\qquad
t'=\eta(V)\,\widetilde{\gamma}\left(t-\frac{V}{c^{2}}x\right),\qquad
\widetilde{\gamma}=\frac{1}{\sqrt{V^{2}/c^{2}-1}},\quad |V|>c,
\label{eq:superlum}
\end{equation}
where $\eta(V)\in\{\pm 1\}$. Unlike the subluminal case, there is no $V\to 0$ limit that fixes $\eta$ by continuity. So
even in $1{+}1$, the formulas themselves do not settle their physical interpretation.

\paragraph{Interval sign flip (so: not a Lorentz symmetry).}
Although \eqref{eq:superlum} looks familiar, it does not preserve the Minkowski quadratic form. One finds
\begin{equation}
x'^2 - c^2 t'^2 \;=\; -\bigl(x^2 - c^2 t^2\bigr)\qquad(\eta^2=1),
\label{eq:interval_flip}
\end{equation}
so timelike and spacelike separations are exchanged. In other words, \eqref{eq:superlum} is an anti-isometry in
$1{+}1$. This is exactly why it is not, on its own, a standard SR ``change of inertial frame.''

\paragraph{Why $1{+}3$ is different.}
In $1{+}3$, QPR’s key warning is that if one tries to treat subluminal Lorentz transformations and a superluminal branch
on equal ``relativity'' footing, one is pushed toward non-isometric maps (e.g.\ anisotropic scalings) as putative
symmetries, which is not empirically acceptable \citep{DraganEkert2020}. So in $1{+}3$, superluminal maps should not be
treated as physical observer equivalences. At most, they can serve as auxiliary redescriptions, typically with altered
roles for time and space.

\paragraph{Role of Damski.}
Damski’s contribution is clean and kinematic: it clarifies parametrizations and sign/time-orientation conventions for
$1{+}1$ superluminal maps \citep{Damski2025}. It does not (and does not claim to) produce operational predictions such as
loop interference, nor a probability rule.

% ============================================================
\section{QPR, determinism, and what ``indeterminism'' means in this setting}
% ============================================================

\paragraph{The determinism assumption should be stated out loud.}
QPR’s ``indeterminism'' argument relies on a particular notion of determinism: roughly, that certain event parameters are
fixed by data local to the emitter’s own past worldline \citep{DraganEkert2020}. Critics have stressed that this is a
stronger requirement than textbook SR and should be stated explicitly \citep{DelSantoHorvat2022,Horodecki2024}.

\begin{definition}[Past-worldline data]
Let $S$ be an emitting system with timelike worldline $\Gamma_S$, equipped with an intrinsic time order (e.g.\ proper
time $\tau$). For an event $e\in\Gamma_S$, define
\begin{equation}
\Gamma_S^{-}(e):=\{e'\in\Gamma_S:\tau(e')<\tau(e)\}.
\label{eq:past_worldline}
\end{equation}
The past-worldline data at $e$ is $\mathcal{I}^{-}_{S}(e)$, the restriction of the system’s classical state variables to
$\Gamma_S^{-}(e)$.
\end{definition}

\begin{definition}[Local determinism (QPR sense)]
A process involving an event $e$ on $\Gamma_S$ is locally deterministic if there exists a function $F$ such that an
operational parameter of the event (for example, its proper-time location $\tau(e)$, or a binary occurrence indicator
$\chi(e)\in\{0,1\}$) is determined by $\mathcal{I}^{-}_{S}(e)$:
\begin{equation}
\begin{aligned}
\tau(e) &= F\!\left(\mathcal{I}^{-}_{S}(e)\right)
\quad\text{or}\quad
\chi(e)=F\!\left(\mathcal{I}^{-}_{S}(e)\right).
\end{aligned}
\label{eq:local_determinism}
\end{equation}
\end{definition}

\paragraph{QPR as a programme-level requirement.}
Because superluminal maps are not physical symmetries in $1{+}3$, treating them ``like'' frame changes is ultimately a
normative move: it states what the programme \emph{wants} to be true, rather than what SR already guarantees.

\begin{postulate}[QPR (invariance of local-deterministic admissibility)]
For any process $P$ in the theory’s domain, the statement ``$P$ admits a locally deterministic description'' has the same
truth value under all admissible redescriptions within the regime where the programme claims to apply.
\end{postulate}

\begin{proposition}[Conditional tension result (informal)]
Assume (i) physically meaningful processes exist that are described as superluminal exchanges, and (ii) local determinism
is taken in the ``past-worldline sufficiency'' sense without using information outside the physical causal past $J^{-}(e)$
of the emission event. Then different admissible descriptions can disagree on whether a locally deterministic causal
explanation exists that uses only $\mathcal{I}^{-}_{S}(e)$ and data supported in $J^{-}(e)$. Under QPR, one must then
either give up this form of local determinism (for that class of processes) or introduce preferred structures.
\end{proposition}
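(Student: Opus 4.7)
The plan is to establish the conclusion by exhibiting an existence example: a single physical process for which two admissible redescriptions disagree on the truth value of ``admits a locally deterministic description using only $\mathcal{I}^{-}_{S}(e)$ and data supported in $J^{-}(e)$.'' A single such example is enough to force the final trichotomy. I would work throughout in $1{+}1$, where hypothesis (i) can be realized concretely and where the auxiliary redescriptions of equation~\eqref{eq:superlum} are available without having to fight the $N>2$ rigidity result cited after Definition~3.

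First, I would fix a spacelike segment joining an emission event $e$ on a timelike worldline $\Gamma_S$ and an absorption event $e^{\ast}$ on a second timelike worldline $\Gamma_R$, with $e^{\ast}\notin J^{+}(e)$ in the standard Minkowski causal structure (Definition~4). In the original inertial coordinates, the data $\mathcal{I}^{-}_{S}(e)$ of equation~\eqref{eq:past_worldline} together with fields supported in $J^{-}(e)$ form the natural candidate input for the deterministic function $F$ of equation~\eqref{eq:local_determinism}; by hypothesis (i), the programme permits a description in these coordinates in which $\tau(e)$ or $\chi(e)$ is set by this input.

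Next, I would apply an admissible redescription of the form~\eqref{eq:superlum} with $|V|>c$ chosen so that the spacelike segment $[e,e^{\ast}]$ is mapped to a ``timelike-looking'' one in the new coordinates. By \eqref{eq:interval_flip} this map is an anti-isometry, so two things must be spelled out explicitly. First, $\Gamma_S$, timelike in the original description, is no longer timelike in the new coordinates, hence the intrinsic time ordering used to define $\mathcal{I}^{-}_{S}(e)$ in~\eqref{eq:past_worldline} does not survive the redescription in its original operational meaning; the roles of ``emitter'' and ``absorber'' are not preserved either. Second, because the redescription is non-isometric, it does not preserve $J^{\pm}$ (Definition~4), so the portion of $J^{-}(e)$ used as input is not in general contained in the physical causal past of the image event. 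Taken together, the sentence ``this process admits a locally deterministic description using only past-worldline data and data in $J^{-}$'' is true in at most one of the two descriptions, which is the claimed disagreement.

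Finally, I would invoke Postulate~1: QPR requires that this truth value be invariant across admissible redescriptions. The example shows that invariance cannot be maintained for processes of the class in hypothesis (i) without one of two moves: weakening (ii) (allowing $F$ to depend on data outside $\mathcal{I}^{-}_{S}(e)$ and $J^{-}(e)$, which is precisely ``giving up this form of local determinism''), or introducing a preferred structure that picks one description as physical---for instance a preferred time-orientation convention, a fixed choice of $\eta(V)$ in~\eqref{eq:superlum}, or a preferred frame that selects which worldline is the emitter. The hard step, I expect, is not the kinematic algebra but the conceptual bookkeeping: one must state precisely what ``locally deterministic description'' is supposed to mean once the emitter's worldline has ceased to be timelike under the auxiliary redescription, and do so without tacitly importing the very preferred structure that the proposition is designed to force into the open.
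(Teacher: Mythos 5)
The paper itself supplies no proof for this proposition: it is explicitly labelled ``informal'' and is followed only by a remark, so there is nothing to match line by line. Measured against the argument the paper is clearly gesturing at (the indeterminism argument of \citet{DraganEkert2020} as reframed by \citet{DelSantoHorvat2022}), your reconstruction takes a genuinely different route. The standard route never touches the superluminal maps \eqref{eq:superlum}: it uses only ordinary subluminal boosts, under which the time order of the spacelike-separated pair $(e,e^{\ast})$ reverses, so the roles of emitter and absorber swap between two \emph{isometric} descriptions. Both worldlines stay timelike in both frames, $\mathcal{I}^{-}_{S}$ and $J^{-}$ remain well defined throughout, and the disagreement is sharp: in one frame the emission at $e$ is fixed by $\mathcal{I}^{-}_{S}(e)$, while in the other the ``emission'' at $e^{\ast}$ has its determining data sitting outside $J^{-}(e^{\ast})$. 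That version buys you a clean, well-posed disagreement at the cost of no exotic kinematics at all.

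The genuine gap in your version is the step you yourself flag at the end. After applying the anti-isometry \eqref{eq:superlum}--\eqref{eq:interval_flip}, $\Gamma_S$ is no longer timelike, so the proper-time order entering \eqref{eq:past_worldline} and hence the predicate in \eqref{eq:local_determinism} is not merely \emph{false} in the new description --- it is \emph{undefined}. ``True in at most one of the two descriptions'' therefore does not yet exhibit two admissible descriptions assigning different truth values to the same well-posed statement, which is what the proposition (and the QPR postulate it feeds into) requires; a failure of the predicate to be posable is a different, weaker kind of tension. To close this you must either supply the missing extension of ``locally deterministic description'' to non-timelike image worldlines without smuggling in the preferred structure you are trying to expose, or retreat to the subluminal-boost version where the problem does not arise. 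A secondary, smaller stretch is reading hypothesis (i) as granting that the original-frame description is locally deterministic; this is harmless for the final disjunction (if no frame admits such a description, the first horn holds trivially) but should be said explicitly.
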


\begin{remark}
This is a conditional conclusion tied to a specific determinism notion. It is not the claim that SR alone implies
quantum theory. This is consistent with the more modest tone of the reply \citep{DraganEkertReply2023}.
\end{remark}

\paragraph{Why ``superposition from coordinate maps'' overreaches.}
A coordinate change cannot literally turn a single continuous worldline into a branching ``Y'' curve. So if one wants a
``many paths'' picture, that is extra structure; it does not follow from relabeling events alone \citep{DelSantoHorvat2022}.

\begin{definition}[Classical trajectory]
A classical trajectory is the image of a continuous injective map $\gamma:[0,1]\to\mathbb{R}^{1+1}$ (or
$\mathbb{R}^{1+3}$), up to reparametrization.
\end{definition}

\begin{lemma}[No branching under homeomorphisms]
Let $h$ be a homeomorphism of spacetime. If $\gamma([0,1])$ is homeomorphic to an interval, then $h(\gamma([0,1]))$ is
also homeomorphic to an interval and cannot be homeomorphic to a ``Y'' graph.
\end{lemma}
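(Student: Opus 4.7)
The plan is to reduce the claim to two elementary topological facts: (i) homeomorphisms preserve the homeomorphism type of subsets, and (ii) a closed interval and a ``Y'' graph are distinguishable by a local cut-point invariant. Neither step requires anything beyond point-set topology, so the argument is short; the only ``obstacle'' is notational, namely pinning down what is meant by a ``Y'' so that the cut-point count is unambiguous.

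First, I would restrict $h$ to $\gamma([0,1])$. Since $h$ is a homeomorphism of the ambient spacetime, the corestriction $h|_{\gamma([0,1])}:\gamma([0,1])\to h(\gamma([0,1]))$ is a homeomorphism of the subspaces (with the induced topology). Composing with the hypothesized homeomorphism $\gamma([0,1])\cong[0,1]$ immediately gives $h(\gamma([0,1]))\cong[0,1]$, which settles the first half of the lemma.

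For the second half, I would formalize a ``Y'' graph $Y$ as the one-point union of three arcs joined at a common endpoint $p$ (equivalently, three copies of $[0,1]$ with their $0$ endpoints identified). The key invariant is the number of connected components of $X\setminus\{q\}$ as $q$ ranges over $X$. For an interval $I=[0,1]$, removing any interior point yields exactly two components and removing an endpoint yields exactly one, so $\max_{q\in I}\#\pi_0(I\setminus\{q\})=2$. For the ``Y'', removing the branch point $p$ yields three components (the three open edges), so $\max_{q\in Y}\#\pi_0(Y\setminus\{q\})\ge 3$. Since any homeomorphism $\varphi:I\to Y$ would induce a homeomorphism $I\setminus\{q\}\to Y\setminus\{\varphi(q)\}$, and homeomorphisms preserve the number of connected components, no such $\varphi$ can exist. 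Hence $h(\gamma([0,1]))\cong[0,1]$ cannot be homeomorphic to $Y$.

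The main obstacle, such as it is, lies in making the Y model precise enough that ``the branch point has three local components'' is an invariant claim rather than a picture. I would handle this by stating the standard one-point-union definition explicitly and invoking the local-cut-point invariant (number of components of a punctured neighbourhood, or equivalently of the whole space minus a point in a connected one-dimensional setting). With that in place, the whole argument fits in a few lines and relies on no relativistic structure, only on $h$ being a homeomorphism; this matches the paper's point that coordinate-level maps cannot, by themselves, manufacture a branching ``many-paths'' picture out of a single worldline.
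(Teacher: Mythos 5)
Your argument is correct and is essentially the paper's own: both proofs rest on the cut-point invariant (removing an interior point of an interval leaves two components, removing the branch point of a $Y$ leaves three) together with the fact that homeomorphisms preserve this count. You simply spell out the preliminary step (restricting $h$ to the image) and the formal model of the $Y$-graph more explicitly than the paper's sketch does.
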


\begin{proof}[Sketch]
Removing an interior point of an interval yields two connected components; removing the branching point of a Y-graph
yields three. Homeomorphisms preserve this invariant.
\end{proof}

% ============================================================
\section{Operational completion: interference is the real test (and it forces amplitudes)}
% ============================================================

\paragraph{Interference loops are the benchmark.}
If one wants to talk about superposition in an operational sense, the key benchmark is closed-loop interference:
alternatives must be able to recombine, and the output statistics must depend continuously on a tunable phase (as in a
Mach--Zehnder interferometer) \citep{DelSantoHorvat2022}. We treat this as non-optional: any ``quantum-like'' account has
to get this right.

\begin{postulate}[O1: Detector-model neutrality]
Any ``quantum-like'' claim should be robust across reasonable detector models, including non-demolishing and
absorb-and-reemit devices \citep{DelSantoHorvat2022}.
\end{postulate}

\begin{postulate}[O2: Loop completeness]
A framework that claims to recover superposition must reproduce closed-loop interference (e.g.\ Mach--Zehnder
recombination) with a continuous dependence of output statistics on a tunable relative phase or delay, in setups where
the alternatives are operationally indistinguishable at recombination \citep{DelSantoHorvat2022}.
\end{postulate}

\begin{postulate}[O3: Statistics invariance under physical symmetries (optional strengthening omitted)]
If probabilistic predictions are made for an operationally specified experiment in one physical inertial frame, then any
other physical inertial frame related by a Lorentz isometry must yield the same measurable statistics.
\end{postulate}

\begin{definition}[Probability rule (minimal)]
A probability rule is a functional $P(\cdot)$ mapping amplitudes (or equivalence classes thereof) to outcome weights
interpretable as probabilities, subject at minimum to: (i) invariance under global phase $P(uA)=P(A)$ for the action
$A\mapsto u(\phi)A$ of Postulate~A4, and (ii) operational normalization and additivity for mutually exclusive outcomes.
To match O2 one also assumes that there exist setups in which $P$ depends continuously and nontrivially on a relative
phase parameter for operationally indistinguishable alternatives. No specific choice such as $P(A)=|A|^{2}$ is assumed.
\end{definition}

\begin{remark}
O3 enforces invariance under physical (Lorentz) frame changes. It is not, by itself, a derivation of the Born rule. This
matches the incompleteness concerns emphasized in \citep{Horodecki2024}.
\end{remark}

\paragraph{Why kinematics plus classical mixtures cannot pass O2.}
Once O2 is taken seriously, a story that only uses coordinate maps, worldlines, and classical mixture rules cannot do the
job.

\begin{definition}[Kinematics-only model (event structure + classical mixtures)]
A kinematics-only model is one in which (i) the ontology consists of classical events and worldlines, (ii) admissible
transformations are coordinate maps, and (iii) experiment probabilities are computed by a classical mixture rule over
alternatives (adding nonnegative weights), without introducing a separate phase-like quantity whose relative value
changes recombination statistics for operationally indistinguishable alternatives.
\end{definition}

\begin{proposition}[No-go for kinematics-only loop completeness]
\label{prop:nogo_kin_only}
No kinematics-only model (in the above sense) can satisfy O2 for a Mach--Zehnder-type interferometer. To reproduce
continuous interference fringes for operationally indistinguishable alternatives, one must leave the classical mixture
rule and introduce a phase-sensitive composition rule (amplitudes) at recombination.
\end{proposition}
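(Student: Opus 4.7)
The plan is to argue by contradiction on a minimal Mach--Zehnder template, using the classical law of total probability forced by clause~(iii) of the ``kinematics-only'' definition. First I would fix the operational setup: a source $S$, two arms $a$ and $b$ separated after a first beamsplitter, recombination at a second beamsplitter with output ports $D_\pm$, and a tunable phase or delay $\phi$ inserted in arm $b$. Operational indistinguishability at recombination is built in: no which-way record is available at the point of detection. What I must contradict is O2, namely that $P_+(\phi)$ depends continuously and nontrivially on $\phi$.

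Next I would apply the kinematics-only structure. Because the ontology consists of classical events and worldlines (clause~(i)), and alternatives at recombination are combined by a classical mixture with nonnegative weights (clause~(iii)), the law of total probability gives
\begin{equation}
P_+(\phi) \;=\; p_a\,P_+^{(a)}(\phi) \;+\; p_b\,P_+^{(b)}(\phi),
\qquad p_a,p_b\ge 0,\ p_a+p_b=1,
\label{eq:classicalmix}
\end{equation}
where $p_a,p_b$ are the arm-selection probabilities fixed at the first beamsplitter and $P_+^{(i)}(\phi)$ is the conditional probability of registering at $D_+$ given arm $i$ was taken. By the proviso in clause~(iii) the model contains no phase-like quantity whose \emph{relative} value across arms alters recombination statistics for indistinguishable alternatives, so each $P_+^{(i)}$ is determined by in-arm local history alone. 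In particular $P_+^{(a)}$ is $\phi$-independent (the plate lies in arm $b$), and $P_+^{(b)}$ can depend on $\phi$ only through intra-arm timing, not through any cross-arm comparison.

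Then I would extract the contradiction. Equation~\eqref{eq:classicalmix} confines $P_+(\phi)$ to the interval $[\min_i P_+^{(i)},\,\max_i P_+^{(i)}]$, and since neither $P_+^{(i)}$ is a function of the relative arm phase, the total $P_+(\phi)$ cannot reproduce the continuous full-visibility fringe $\tfrac{1}{2}(1+\cos\phi)$ that O2 demands. Any attempt to force \eqref{eq:classicalmix} to swing between near $0$ and near $1$ as $\phi$ varies requires a conditional probability that depends on the other arm's phase, contradicting clause~(iii). The positive content of the no-go then follows: to keep loop completeness one must replace the mixture rule by a composition in which the two alternatives carry a shared, phase-valued label and combine \emph{before} any probability functional is applied---i.e.\ amplitudes.

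The step I expect to be the main obstacle is making ``does not depend on the relative phase'' fully water-tight, because a determined opponent will try to smuggle classical wave amplitudes (which do interfere) into a ``kinematics-only'' account. The response is that such a move already \emph{introduces} the phase-sensitive composition rule that clause~(iii) explicitly excludes: classical wave optics is not a convex mixture theory of alternatives, it is already an amplitude theory at the level of the field. I would state this as a short remark immediately after the proof, so that the no-go stands or falls with the strict reading of clause~(iii) already given in the paper, rather than with a tacit extra assumption about what ``classical'' means.
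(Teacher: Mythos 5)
Your proof is correct and its core coincides with the paper's: your law-of-total-probability decomposition is exactly the paper's schematic mixture $P=\sum_k w_k P_k$ with $w_k\ge 0$, and both arguments turn on the observation that clause~(iii) of the kinematics-only definition makes the conditional probabilities insensitive to the \emph{relative} arm phase. The difference is in how the impossibility is cashed out. The paper's distinctive move is a blocked-arm calculation---removing one alternative leaves $P(D_0)=P(D_1)=1/4$ unconditioned on loss, or $1/2$ conditioned on detection, manifestly $\phi$-independent either way---offered as an illustration that the mixture description never sees the phase. You instead make the claim quantitative with the convex bound $P_+(\phi)\in[\min_i P_+^{(i)},\max_i P_+^{(i)}]$, which is arguably the cleaner way to turn ``cannot produce phase-controlled cancellation'' into an inequality. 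One soft spot: having conceded that $P_+^{(b)}$ may depend on $\phi$ through intra-arm timing, your contradiction as written only excludes \emph{full-visibility} fringes, whereas O2 asks only for continuous nontrivial dependence on the relative phase; to close that residual loophole you must lean on the operational-indistinguishability clause (a delay that alters the conditional statistics at recombination constitutes which-way information), not on the convex bound alone---a point the paper's sketch also glosses over by tacitly taking the conditionals to be constants. Finally, your closing paragraph about classical wave optics already being an amplitude theory reproduces, almost verbatim, the Remark the paper places immediately after the proposition, which confirms you identified the same pressure point.
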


\begin{proof}[Proof sketch]
In a Mach--Zehnder interferometer, changing a relative phase $\phi$ in one arm can continuously move probability weight
between the output ports, including destructive cancellation at one port in an idealized setup. A classical mixture over
alternatives has the schematic form
\begin{equation}
P = \sum_k w_k\,P_k,\qquad w_k\ge 0,\ \sum_k w_k=1,
\label{eq:classical_mixture}
\end{equation}
and it cannot produce phase-controlled cancellation between indistinguishable alternatives.

One quick way to see the limitation is to block one arm. Unconditioned on loss, a 50/50 first beam splitter sends half
the intensity (or probability weight) into the blocked arm and removes it; the remaining half reaches the second beam
splitter and splits again, giving $P(D_0)=P(D_1)=1/4$ and a loss channel $P(\text{absorb})=1/2$. Conditioned on detection
at the outputs, one gets $P(D_0\mid \text{detected})=P(D_1\mid \text{detected})=1/2$, still with no dependence on $\phi$.
Either way, the mixture description stays phase-independent. To get cancellation as $\phi$ varies, one needs a signed or
complex composition law, i.e.\ amplitudes that depend on relative phase \citep{DelSantoHorvat2022}.
\end{proof}

\begin{remark}
This proposition forces a phase-sensitive (wave-like) composition structure, but not yet uniquely a quantum one.
Classical wave optics also uses phase-sensitive addition. ``Quantum'' enters once one supplies a probability map
(Born-type or equivalent) and a consistent account of measurement/which-way information in line with O1--O3.
\end{remark}

\paragraph{Minimal amplitude calculus.}
QPR naturally points toward amplitudes as a convenient way to build a covariant, phase-sensitive calculus, while also
noting that kinematics alone does not fix the phase scale or the dynamics \citep{DraganEkert2020,DraganEkertReply2023}.
Here we simply make explicit the minimal structural assumptions needed once O2 is treated as a hard requirement.

\begin{remark}
Postulates A1--A2 are assumptions about how alternatives and concatenations compose. They are not consequences of
kinematics alone.
\end{remark}

\begin{postulate}[A1: Sum rule for indistinguishable alternatives]
If $\gamma_1$ and $\gamma_2$ are alternatives that are not distinguished (no which-way information) and can later
recombine, then there exists an assignment $A(\gamma)$ such that
\begin{equation}
A(\gamma_1 \ \text{or}\ \gamma_2)=A(\gamma_1)+A(\gamma_2).
\label{eq:sum_rule}
\end{equation}
\end{postulate}

\begin{postulate}[A2: Product rule for concatenation]
For sequential concatenation $\gamma=\gamma_{AB}\circ \gamma_{BC}$,
\begin{equation}
A(\gamma)=A(\gamma_{AB})\,A(\gamma_{BC}).
\label{eq:product_rule}
\end{equation}
\end{postulate}

\begin{postulate}[A3: Nontrivial interference exists]
There exist alternatives for which observed recombination statistics are not reducible to a classical mixture;
operationally, this is exactly what O2 demands in suitable setups.
\end{postulate}

\begin{postulate}[A4: Continuous phase symmetry]
There exists a continuous one-parameter multiplicative group action $u(\phi)$ on amplitudes such that global
$A\mapsto u(\phi)A$ leaves probabilities invariant, while relative phases can change recombination statistics.
\end{postulate}

\begin{proposition}[Minimal carrier (two real dimensions; complex numbers as a standard realization)]
Under A1--A4 and mild regularity assumptions (continuity; non-degeneracy of the phase action), a one-dimensional real
scalar carrier is insufficient once one asks for a continuous phase action as in A4. The minimal workable carrier is
two-dimensional over $\mathbb{R}$ with a unit-circle subgroup acting multiplicatively. With standard algebraic
coherence conditions for composition (associativity, distributivity, identity), this structure is naturally realized as
$\mathbb{C}$ with $u(\phi)=e^{i\phi}$ (equivalently, a 2D real representation of $U(1)$).
\end{proposition}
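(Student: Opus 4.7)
The plan is to split the argument into three stages matching the three claims in the proposition: impossibility on a one-dimensional real carrier, feasibility on a two-dimensional one, and uniqueness (up to algebra isomorphism) of $\mathbb{C}$ as the natural realization under A2 and the coherence conditions. For the first stage I would assume amplitudes lie in $\mathbb{R}$ and that $u:\mathbb{R}\to\mathrm{GL}_1(\mathbb{R})=\mathbb{R}^\times$ is a continuous multiplicative one-parameter action. By continuity of the image, the map factors through the identity component $\mathbb{R}_{>0}$. A4 requires $u(\phi)$ to preserve the probability functional on every $A$, while A3 (together with O2) forces that functional to be nontrivial and to distinguish scales; the only continuous one-parameter subgroup of $\mathbb{R}_{>0}$ preserving any positive-definite scale on $\mathbb{R}$ is the trivial one, which contradicts the non-degeneracy built into A4.

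Second I would verify that two real dimensions suffice and are minimal once A1 is imposed. Taking the carrier as $\mathbb{R}^{2}$ with Euclidean norm, the rotation group $SO(2)$ supplies a continuous compact one-parameter subgroup satisfying A4: rotations preserve the norm while changing the relative orientation of summands under the additive A1. Non-degeneracy combined with the $\mathbb{R}$-linearity forced by A1 forbids further reduction, since any proper $u$-invariant real subspace would return us to the one-dimensional situation already excluded.

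Third I would pin down the algebraic shape using A2 together with the stated coherence conditions. Associativity, distributivity, and an identity restrict the multiplication on the carrier to that of an associative unital $\mathbb{R}$-algebra of real dimension $2$. A standard classification yields exactly three such algebras up to isomorphism: $\mathbb{C}$, $\mathbb{R}\oplus\mathbb{R}$, and the dual numbers $\mathbb{R}[\epsilon]/(\epsilon^{2})$. Among these only $\mathbb{C}$ has a compact connected one-parameter subgroup of units, realized as $U(1)=\{e^{i\phi}\}$; the unit groups of the other two algebras have identity components isomorphic to $\mathbb{R}_{>0}\times\mathbb{R}_{>0}$ and $\mathbb{R}_{>0}\times(\mathbb{R},+)$ respectively, neither of which contains a compact one-parameter subgroup. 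The $U(1)$-compatibility built into A4 therefore selects $\mathbb{C}$ with $u(\phi)=e^{i\phi}$.

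I expect the main obstacle to be this third stage, not because the algebra classification itself is subtle, but because the separation between the three candidate algebras must be phrased using only what A4 actually supplies, without tacitly invoking a specific probability rule. I would handle this by reformulating A4's non-degeneracy as the requirement of a continuous, nontrivial, compact one-parameter subgroup of the unit group, which is exactly the structural feature that $\mathbb{R}\oplus\mathbb{R}$ and the dual numbers lack. A secondary care point is to ensure that the transition from ``probability invariance'' in A4 to ``preservation of a positive-definite form'' does not smuggle in any concrete functional form of $P$; this is managed by reading A4 together with O2 as demanding merely that some nontrivial compact phase orbit exist, which is enough to force the selection above while leaving the Born-type rule as an independent, still-to-be-added layer, in line with the remark following Proposition~\ref{prop:nogo_kin_only}.
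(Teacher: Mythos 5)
Your proposal is sound, but note first that the paper itself gives \emph{no} proof of this proposition: it is stated as a structural claim and followed only by a remark, so there is no in-text argument to match yours against. What you supply is therefore a genuine completion rather than an alternative route. Your three-stage decomposition is the natural one and each stage is essentially correct at the sketch level the paper operates at: the exclusion of a one-dimensional carrier via the fact that a continuous homomorphism $\phi\mapsto u(\phi)$ into $\mathbb{R}^{\times}$ lands in $\mathbb{R}_{>0}$ (since $u(\phi)=u(\phi/2)^{2}$) and hence cannot act nontrivially while preserving a nondegenerate $P$; the sufficiency of $\mathbb{R}^{2}$ with $SO(2)$; and the classification of two-dimensional unital associative real algebras as $\mathbb{C}$, $\mathbb{R}\oplus\mathbb{R}$, and $\mathbb{R}[\epsilon]/(\epsilon^{2})$, with $\mathbb{C}$ singled out as the only one whose unit group contains a nontrivial compact connected one-parameter subgroup. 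You are also right to flag the third stage as the load-bearing step: without compactness (or periodicity) of the phase orbit, the split-complex algebra $\mathbb{R}\oplus\mathbb{R}$ admits a perfectly good noncompact one-parameter unit subgroup (the hyperbolic rotations preserving $x^{2}-y^{2}$), so the selection of $\mathbb{C}$ genuinely depends on reading ``non-degeneracy of the phase action'' together with the proposition's own ``unit-circle subgroup'' phrasing as demanding a compact orbit, and your explicit reformulation of A4 to that effect is the honest way to do it without smuggling in a Born-type rule. The one refinement worth adding is in stage one: after reducing to $u(\phi)=e^{\lambda\phi}$, the cleanest contradiction is that $P$ invariant under all positive rescalings is constant on rays, so $\phi\mapsto P(A_{1}+u(\phi)A_{2})$ takes at most three values and, being continuous on a connected domain, is constant---directly violating the continuous nontrivial phase dependence demanded by O2/A3; this avoids any appeal to ``preserving a positive-definite scale,'' which is vaguer than it needs to be.
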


\begin{remark}
This explains why complex amplitudes (or an equivalent 2D real phase structure) are the simplest upgrade once one insists
on loop interference and composition. It does not derive the Born rule. The constant $\hbar$ appears only when mapping
dynamics to phase; QPR is right that kinematics alone cannot fix it \citep{DraganEkert2020}.
\end{remark}

% ============================================================
\section{Bridge and outlook: the ``no footbridge'' objection}
% ============================================================

Horodecki’s diagnosis is that without a concrete structure connecting subluminal and superluminal sectors, QPR remains
incomplete \citep{Horodecki2024}. A programme can respond in two ways: either narrow the claim (QPR as a heuristic) or
propose an explicit bridge and then test it.

A candidate bridge is the twin-Hilbert-space tachyon QFT suggested by Paczos et al.\ \citep{Paczos2024}. The intended
role is structural: provide a representation in which Lorentz covariance, commutators, and a vacuum state can be handled
in an enlarged state space. Note that ``tachyonic'' in QFT often refers to instabilities (negative $m^2$); here ``tachyon''
is used in the superluminal-sector sense of \citealt{Paczos2024}. Also note that the proposal concerns covariance under
the standard Lorentz group; connecting that to QPR’s auxiliary superluminal redescriptions would require an additional,
explicit identification step.

\begin{postulate}[B1: Twin-space bridge (candidate target properties)]
A candidate bridge aims to realize an enlarged state space $\mathcal{H}_{\mathrm{twin}}=\mathcal{F}\otimes\mathcal{F}^\star$
together with a unitary representation $U$ of the standard Lorentz group (or a specified subgroup) acting on
$\mathcal{H}_{\mathrm{twin}}$ such that:
(i) commutation relations are preserved under $U(\Lambda)$;
(ii) an invariant vacuum-like state exists on $\mathcal{H}_{\mathrm{twin}}$; and
(iii) amplitudes for processes are computed as a $c$-number functional (for example, trace-type) on a suitable subspace,
yielding an operationally meaningful amplitude calculus.
\end{postulate}

\begin{remark}[Status and limitations of B1]
Paczos et al.\ propose a construction intended to meet B1 \citep{Paczos2024}, but it has been challenged. For example,
\citet{Jodlowski2024Unphysical} argues that (as formulated) the commutator is not Lorentz invariant and that microcausality
fails except in a limiting case. Also, B1 by itself does not guarantee operational adequacy: one must still show
explicitly how measurement models and interferometric loops are represented so that O2 (loop completeness) and O3
(statistics invariance) hold in the proposed formalism.
\end{remark}

\paragraph{One-line summary.}
Kinematics can point to a conditional tension with a strong form of local determinism; operational adequacy forces a
phase-sensitive composition rule (amplitudes); and any serious ``derivation'' programme needs an explicit dynamical
bridge---not just coordinate maps.

\bibliographystyle{unsrtnat}
\bibliography{refs}

\end{document}